\renewcommand{\maketag@@@}[1]{\hbox{\m@th\normalsize\normalfont#1}}%
\newtheorem{lemma}{Lemma}
\def\BibTeX{{\rm B\kern-.05em{\sc i\kern-.025em b}\kern-.08em
    T\kern-.1667em\lower.7ex\hbox{E}\kern-.125emX}}
\begin{document}

\title{Multiuser Resource Allocation for Semantic-Relay-Aided Text Transmissions\\
}

\author{\IEEEauthorblockN{Zeyang Hu\IEEEauthorrefmark{1}, Tianyu Liu\IEEEauthorrefmark{1}, Changsheng You\IEEEauthorrefmark{1}, Zhaohui Yang\IEEEauthorrefmark{2}, and Mingzhe Chen\IEEEauthorrefmark{3}} 
\IEEEauthorblockA{\IEEEauthorrefmark{1}\text{Department of Electronic and Electrical Engineering}, \text{Southern University of Science and Technology}, Shenzhen, China\\
}
\IEEEauthorblockA{\IEEEauthorrefmark{2}\text{College of Information Science and Electronic Engineering}, \text{Zhejiang University}, Hangzhou, China\\
}
\IEEEauthorblockA{\IEEEauthorrefmark{3}\text{Department of Electrical and Computer Engineering and Institute for Data Science and Computing}, \\
\text{University of Miami}, Coral Gables, FL, 33146, USA\\ 
Emails: \{huzy2022, liuty2022\}@mail.sustech.edu.cn, youcs@sustech.edu.cn, \\zhaohui\_yang@zju.edu.cn, mingzhe.chen@miami.edu.
}
}

\maketitle
\begin{abstract}
Semantic communication (SemCom) is an emerging technology that extracts useful meaning from data and sends only relevant semantic information. 
Thus, it has the great potential to improve the spectrum efficiency of conventional wireless systems with bit transmissions, especially in low signal-to-noise ratio (SNR) and small bandwidth regions. 
However, the existing works have mostly overlooked the constraints of mobile devices, which may not have sufficient capabilities to implement resource-demanding semantic encoder/decoder based on deep learning. 
To address this issue, we propose in this paper a new \emph{semantic relay} (SemRelay), which is equipped with a semantic receiver to assist multiuser text transmissions. 
Specifically, the SemRelay decodes semantic information from a base station and forwards it to the users using conventional bit transmission, hence effectively improving text transmission efficiency. 
To study the multiuser resource allocation, we formulate an optimization problem to maximize the multiuser weighted sum-rate by jointly designing the SemRelay transmit power allocation and system bandwidth allocation. 
Although this problem is non-convex and hence challenging to solve, we propose an efficient algorithm to obtain its high-quality suboptimal solution by using the block coordinate descent method. 
Last, numerical results show the effectiveness of the proposed algorithm as well as superior performance of the proposed SemRelay over the conventional decode-and-forward (DF) relay, especially in small bandwidth region. 

\end{abstract}

\begin{IEEEkeywords}
Semantic communication, semantic relay, joint power and bandwidth allocation.
\end{IEEEkeywords}

\section{Introduction}
Semantic communication (SemCom) has recently emerged as a promising technology to enhance the data transmission efficiency by leveraging advanced artificial intelligence (AI) techniques  \cite{yang2022semantic, SC_magazine_beyond, lan2021semantic}. 
Thus, it has the great potential to break the Shannon's capacity limits in future sixth-generation (6G) wireless systems. 
Specifically, instead of transmitting bit sequences in conventional wireless systems, SemCom conveys the semantic meaning embedded in the source data to reduce the communication overhead and improve resource utilization efficiency at the semantic level \cite{SC_magazine_beyond}. 
These advantages thus have motivated upsurging research interests in recent years to apply SemCom technologies in various applications, such as education and medical treatment through virtual reality, human-machine interaction with multiple intelligent devices, and collaborative communication between vehicles, which impose stringent end-to-end latency and data rate requirements. 

The existing works on SemCom have mostly focused on three key aspects, namely, SemCom system architectures, semantic representation (or recovery), and resource management. 
Specifically, the recent advancement in deep learning (DL) techniques and the improvement in device computation and storage capabilities have enabled the widespread utilization of deep neural networks in designing SemCom systems. 
For example, a new DL-enabled SemCom system (DeepSC) was proposed in \cite{xie2021deep}, which leverages transformer based techniques for designing joint semantic-and-channel coding. 
Based on DeepSC frameworks, the authors in \cite{xie2022task} and \cite{zhang2022unified} further proposed a new task-oriented multi-modal data SemCom system (MU-DeepSC) and a unified deep learning enabled SemCom (U-DeepSC) system, respectively, for multi-task scenarios, which enable efficient multi-modal data transmissions. 
In \cite{zhou2022cognitive}, the authors proposed a cognitive SemCom framework that employs a knowledge graph shared between the transmitter and receiver, which was demonstrated to outperform other benchmark systems in terms of data compression rate and communication reliability. 
Besides, a graph convolutional networks based (GCN-based) semantic model was proposed in \cite{xiao2022imitation} to construct a better knowledge base available to all edge servers in SemCom, which significantly reduces the required computation, storage, and communication resources. 
Next, in order to exploit the useful meaning of semantic information, a novel multilayer semantic representation approach was proposed in \cite{xiao2022imitation}, which incorporated explicit and implicit semantics. 
In addition, knowledge graph (KG) is also a promising approach to represent and recover semantic information especially in text semantic transmission \cite{wang2022performance}. Third, for SemCom resource management, several representative performance metrics have been proposed recently, such as the semantic rate (S-R), semantic spectral efficiency (S-SE), and semantic similarity \cite{yan2022resource, mu2022semi_NOMA,li2023NOMA_multi_user}. 
For example, the semantic similarity was defined in \cite{yan2022resource} to evaluate the accuracy of semantic transmission and determine whether semantic transmission can satisfy the transmission requirements. 
Besides, a novel rate-splitting method was proposed in \cite{yang2023energy} to reduce communication and computation energy consumption.  
The aforementioned studies have demonstrated that SemCom has the great potential to achieve superior transmission efficiency over conventional bit transmission, especially in low signal-to-noise ratio (SNR) and small-bandwidth scenarios. 
{{However, most prior works in SemCom have simply assumed the deployment of DL neural networks (e.g., DeepSC receivers), on mobile devices.}} 
This assumption, however, overlooks the fact that normal mobile devices may have limited computation and storage resources and thus may not be able to implement DL-based SemCom techniques. 

To overcome the above limitations, we propose in this paper a novel semantic relay (SemRelay) architecture as shown in Fig. \ref{fig_structure}, which is properly deployed to assist multiuser text transmissions from a base station (BS) with a DeepSC transmitter to multiple resource-constrained users. 
Note that the proposed SemRelay differs from the conventional decode-and-forward (DF) relay designed for bit transmission. 
Specifically, equipped with a well-trained DeepSC receiver, the SemRelay decodes the semantic information sent by the BS and forwards it to the users using conventional bit-based transmission. 
This approach thus not only improves the text transmission efficiency through SemCom over the BS$\rightarrow$SemRelay link, but also releases the computation and storage burden at the resource-constrained users, since the computation-demanding semantic decoding is performed at the SemRelay. 
For the considered SemRelay-aided multiuser text transmission system, we assume that the BS$\rightarrow$SemRelay semantic transmission and SemRelay$\rightarrow$user bit transmissions are operated in orthogonal frequency bands. 
An optimization problem is formulated to maximize the achievable weighted sum-(bit)-rate by joint designing the SemRelay transmit power allocation and system bandwidth allocation. 
However, this problem is non-convex due to the intricately coupled optimization variables, thus rendering it difficult to obtain the optimal solution. 
To address this issue, we propose a block coordinate descent (BCD) based algorithm to obtain its high-quality suboptimal solution by alternately optimizing the SemRelay transmit power allocation and system bandwidth allocation, with the other one being fixed. 
Our numerical results demonstrate the effectiveness of the proposed algorithm and the significant rate performance gain achieved by the proposed SemRelay as compared to the conventional DF relay. 

\begin{figure}[!t]
    \vspace{0.07 in}
    \centerline
    {\includegraphics[width=0.85\linewidth]{./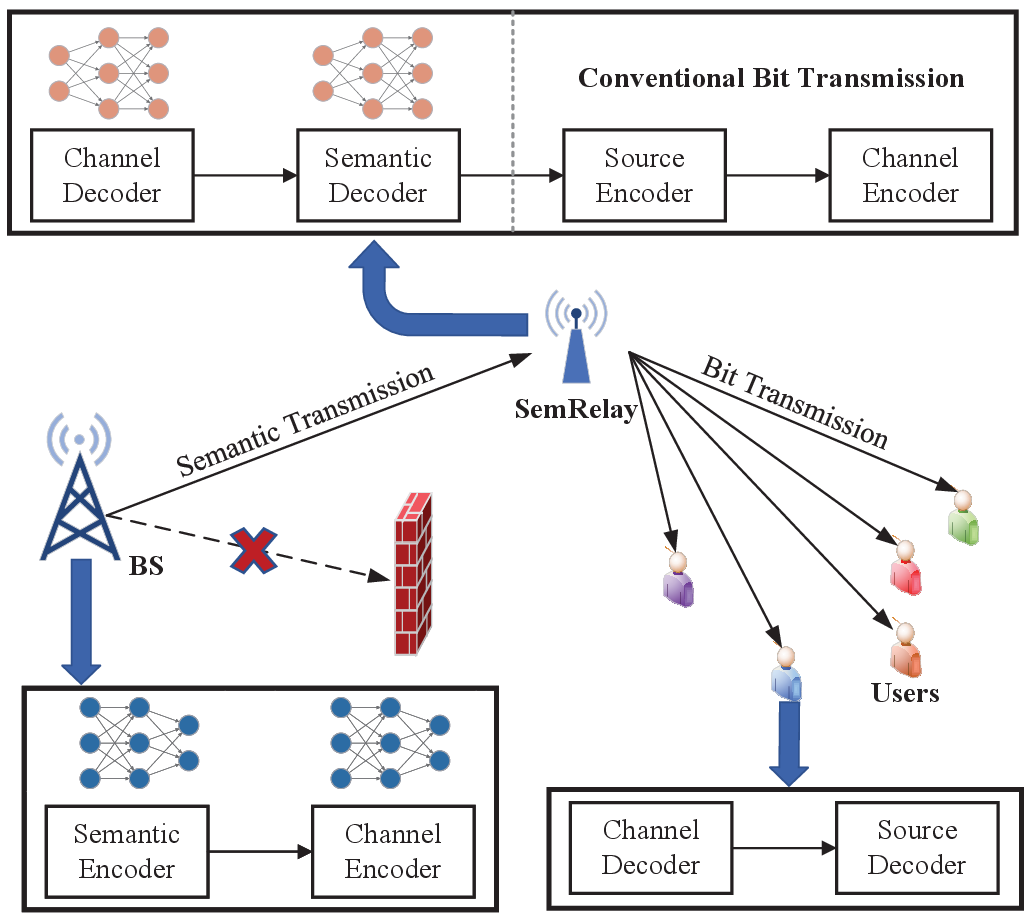}}
    \caption{A SemRelay-aided multiuser text transmission system.}
    \label{fig_structure}
\end{figure}

\section{System Model and Problem Formulation}

\subsection{System Model}

We consider a SemRelay-aided multiuser text transmission system as shown in Fig. \ref{fig_structure}, where a BS transmits text information to $N$ users, all equipped with a single antenna\footnote{This work can be easily extended to the case of multi-antenna BS and users.}. 
We assume that the BS has abundant storage and computation resources for enabling SemCom via deploying a well-trained DeepSC encoder, while the users have limited storage and computation capabilities and thus are unable to decode the semantic information transmitted by the BS. 
Moreover, the direct links between the BS and users are assumed to be blocked due to long distances and obstacles. 

To enable efficient text transmission from the BS to the users, a SemRelay with a DeepSC decoder and abundant computation resources is properly deployed near the users to assist the multiuser text transmissions in two phases. 
Specifically, during the first phase, the BS sends the text information for all the users to the SemRelay via its DeepSC transmitter. 
In the second phase, the SemRelay first decodes the text information from received signals using its channel-and-semantic decoders, and then forwards them to the corresponding users via conventional bit-based transmission. 
We assume that the semantic decoding delay at the SemRelay is negligible thanks to its abundant computation resources. 
Moreover, to avoid communication interference, we adopt the frequency-division multiple access (FDMA) scheme for the SemRelay-aided multiuser text transmissions, where the semantic transmission over the BS$\rightarrow$SemRelay link and bit transmissions over the SemRelay$\rightarrow$user links are operated in separate frequency bands. 


\subsubsection{BS-SemRelay semantic transmission}
Let $\boldsymbol{s}$ denote the original text that the BS transmits to the SemRelay. 
In order to extract the semantic information from $\boldsymbol{s}$ and transmit it accurately over wireless fading channels, $\boldsymbol{s}$ is encoded through the semantic-and-channel encoder. 
In particular, the encoded signal is represented as 
{\setlength\abovedisplayskip{1pt}
\setlength\belowdisplayskip{2pt}
\begin{equation}
    \boldsymbol{x_{\mathrm{br}}} = \mathcal{C}_{s}(\boldsymbol{s}) \in \mathbb{C}^{1 \times M}, 
\end{equation}}%
where $\mathcal{C}_{s}(\cdot)$ denotes the semantic-channel encoder, $M$ is the average number of mapped semantic symbols for $\boldsymbol{s}$. 
Moreover, let $h_{\mathrm{br}}$ denote the channel from the BS to the SemRelay, 
$B$ denote the total bandwidth of the system, and $\alpha_{\mathrm{br}}$ denote the fraction of bandwidth allocated to the BS$\rightarrow$SemRelay link. 
Then the received SNR in dB at the SemRelay, denoted by $\gamma_{\mathrm{br}}$, is given by
\begingroup\makeatletter\def\f@size{8}\check@mathfonts
{\setlength\abovedisplayskip{1pt}
\setlength\belowdisplayskip{2pt}
\begin{equation}
        \gamma_{\mathrm{br}} = 10\log_{10}{\left(\frac{\left|h_{\mathrm{br}}\right|^{2}P_{\mathrm{b}}}{\alpha_{\mathrm{br}}BN_{0}}\right)},
\end{equation}}%
\endgroup
where $P_{\mathrm{b}}$ is the transmit power of the BS,  $N_{0}$ denotes the power spectral density of the AWGN. 
Different from conventional bit transmission, semantic transmission utilizes the semantic symbols to convey semantic information. 
Let $L$ denote the average number of words per sentence, $K$ denote the average number of semantic symbols to represent each word, and $I$ represent the average semantic information per sentence measured in semantic units (suts). 
Hence, the effective semantic rate (in suts/s) in the BS$\rightarrow$SemRelay link, which represents the average amount of successfully transmitted semantic information per second, can be characterized as \cite{mu2022semi_NOMA,li2023NOMA_multi_user}
{\setlength\abovedisplayskip{1pt}
\setlength\belowdisplayskip{2pt}
\begin{equation}
        R_{\mathrm{S}} = \frac{\alpha_{\mathrm{br}}BI}{KL} \varepsilon_{K}(\gamma_{\mathrm{br}}). \label{R_s}
\end{equation}}%
\noindent Herein, $\alpha_{\mathrm{br}}B$ is the communication bandwidth of the BS$\rightarrow$SemRelay link, which equals the value of the symbol rate. 
$\varepsilon_{K}(\gamma_{\mathrm{br}})\in[0,1]$ is the so-called semantic similarity, which measures the distance of semantic information between the original and recovered texts \cite{yan2022resource}. 
As shown in \cite{mu2022semi_NOMA} and \cite{li2023NOMA_multi_user}, the semantic similarity is determined by $K$ and $\gamma_{\mathrm{br}}$. 
Specifically, $\varepsilon_{K}(\gamma_{\mathrm{br}})$ generally follows an `S'-shaped curve with respect to $\gamma_{\mathrm{br}}$ for any $K$ \cite{mu2022semi_NOMA}. 
As such, $\varepsilon_{K}(\gamma_{\mathrm{br}})$ can be approximated by using a generalized logistic function, which is given by 
{\setlength\abovedisplayskip{1pt}
\setlength\belowdisplayskip{2pt}
\begin{equation}
    \varepsilon_{K}(\gamma_{\mathrm{br}}) = a_1 + \frac{a_2}{1+e^{-(c_{1}\gamma_{\mathrm{br}}+c_{2})}}, \label{sesi}
\end{equation}}%
where $a_1, a_2, c_1$, and $c_2$ are constant parameters determined by $K$, which can be numerically obtained. 
To ensure effective text transmission, the semantic similarity should be no smaller than a threshold, i.e., $\varepsilon_{K}(\gamma_{\mathrm{br}}) \geq \bar{\varepsilon}$, where $\bar{\varepsilon}$ is the required minimum semantic similarity \cite{yan2022resource}. 
Moreover, we denote the average number of bits contained in each word by $\mu$ (in bits/word) in conventional text transmission \cite{mu2022semi_NOMA,yan2022resource}. 
Then the (effective) semantic-to-bit rate in the BS$\rightarrow$SemRelay link can be obtained as 
{\setlength\abovedisplayskip{1pt}
\setlength\belowdisplayskip{2pt}
\begin{equation}
        R_{\mathrm{br}} = \mu\frac{R_{\mathrm{S}}}{I/L} =\mu\frac{\alpha_{\mathrm{br}}B}{K} \varepsilon_{K}(\gamma_{\mathrm{br}}). \label{R_br}
\end{equation}}

\addtolength{\topmargin}{-0.03in}
\subsubsection{SemRelay-user bit transmission}
With semantic signals received at the SemRelay, the SemRelay first decodes the text information by the DeepSC receiver and then forwards it to the users via conventional bit transmission in orthogonal bandwidth. 
For each user $n \in \mathcal{N} = \{1,\cdots,N\}$, let $h_{\mathrm{ru}}^{(n)}$ denote its channel from the SemRelay. 
Then its achievable bit rate in bps is
\begingroup\makeatletter\def\f@size{8}\check@mathfonts
{\setlength\abovedisplayskip{1pt}
\setlength\belowdisplayskip{2pt}
\begin{equation}
    \begin{small}
        R_{\mathrm{ru}}^{(n)} = {\alpha_{\mathrm{ru}}^{(n)}B}\log_2\left(1+\frac{\left|h_{\mathrm{ru}}^{(n)}\right|^{2}P_{\mathrm{r}}^{(n)}}{\alpha_{\mathrm{ru}}^{(n)}B N_0}\right),
    \end{small}
\end{equation}}%
\endgroup 
where $P_{\mathrm{r}}^{(n)}$ denotes the SemRelay transmit power allocated for bit transmission to the $n$-th user, $\alpha_{\mathrm{ru}}^{(n)}$ denotes the ratio of bandwidth allocated to the SemRelay$\rightarrow$user $n$-th link. 

\subsection{Problem Formulation}
For the considered SemRelay-aided multiuser text transmission system, our target is to maximize the multiuser weighted sum-rate (i.e., effective bit rate) by jointly optimizing the SemRelay transmit power allocation {\small{$\boldsymbol{P_{\mathrm{r}}}\triangleq\{P_{\mathrm{r}}^{(n)},\forall n \in \mathcal{N}\}$}} and system bandwidth allocation {\small{$\boldsymbol{\alpha} \triangleq \{\alpha_{\mathrm{br}},\alpha_{\mathrm{ru}}^{(n)},\forall n \in \mathcal{N}\}$}}. 
This optimization problem can be formulated as
\begingroup\makeatletter\def\f@size{8}\check@mathfonts
{\setlength\abovedisplayskip{1pt}
\setlength\belowdisplayskip{2pt}
\begin{subequations}\label{P1}
        \begin{align}
        {\text{{\fontsize{10pt}{12pt}\selectfont (P1)}}}\max \limits_{\boldsymbol{P_{\mathrm{r}}},\boldsymbol{\alpha},\gamma_{\mathrm{br}}} \;\;
        &\overset{N}{\underset{n=1}{\sum}} \omega^{(n)} {\alpha_{\mathrm{ru}}^{(n)}B}\log_2\left(1+\frac{\left|h_{\mathrm{ru}}^{(n)}\right|^{2}P_{\mathrm{r}}^{(n)}}{\alpha_{\mathrm{ru}}^{(n)}B N_0}\right) \nonumber\\
        \textrm{s.t.}\quad\;\;
        &\overset{N}{\underset{n=1}{\sum}} {\alpha_{\mathrm{ru}}^{(n)}B}\log_2\left(1+\frac{\left|h_{\mathrm{ru}}^{(n)}\right|^{2}P_{\mathrm{r}}^{(n)}}{\alpha_{\mathrm{ru}}^{(n)}B N_0}\right) \nonumber\\
        &\leq \frac{\alpha_{\mathrm{br}}B\mu}{K}\left(a_1+\frac{a_2}{1+e^{-\left(c_1\gamma_{\mathrm{br}}+c_2\right)}}\right), \label{P1(1)} \\
        &\gamma_{\mathrm{br}} = 10\log_{10}\left(\frac{\left|h_{\mathrm{br}}\right|^{2}P_{\mathrm{b}}}{\alpha_{\text{br}}B N_0}\right), \label{P1(2)}\\
        &a_1+\frac{a_2}{1+e^{-\left(c_1\gamma_{\mathrm{br}}+c_2\right)}} \geq \bar{\varepsilon}, \label{P1(3)}\\
        &\alpha_{\mathrm{br}} + \overset{N}{\underset{n=1}{\sum}}\alpha_{\mathrm{ru}}^{(n)} \le 1, \label{P1_a1}\\ 
        &\alpha_{\mathrm{br}} \geq 0, \alpha_{\mathrm{ru}}^{(n)} \geq 0, \forall n \in \mathcal{N}, \label{P1_a2}\\
        & \overset{N}{\underset{n=1}{\sum}} P_{\mathrm{r}}^{(n)} \leq \bar{P}, \label{P1_P1}\\
        &P_{\mathrm{r}}^{(n)} \geq 0, \forall n \in \mathcal{N}, \label{P1_P2}
        \end{align}
\end{subequations}}%
\endgroup
where $\omega^{(n)}$ is the weight for each user $n$, $\bar{P}$ denotes the total transmit power of the SemRelay.
In particular, (\ref{P1(1)}) is the information-causality constraint, i.e., the SemRelay can only forward the information to the users, which has been previously received from the BS in the BS$\rightarrow$SemRelay link. 
Moreover, the constraint (\ref{P1(3)}) enforces the minimum semantic similarity required for the BS$\rightarrow$SemRelay semantic transmission. 
Note that the constraint (\ref{P1(3)}) can also be expressed in the following equivalent form
\begingroup\makeatletter\def\f@size{8}\check@mathfonts
{\setlength\abovedisplayskip{1pt}
\setlength\belowdisplayskip{2pt}
\begin{equation}
    \gamma_{\mathrm{br}} \geq \frac{1}{c_{1}}\ln(\frac{\bar{\varepsilon}-a_1}{a_1+a_2-\bar{\varepsilon}})-\frac{c_2}{c_1} = \bar{\gamma}. \label{P2_(4)}
\end{equation}}%
\endgroup
\noindent The constraints (\ref{P1_a1})--(\ref{P1_P2}) are the total bandwidth constraints and the transmit power constraints for the SemRelay, respectively.

Note that (P1) is a non-convex optimization problem due to the non-convex constraints in (\ref{P1(1)}) and (\ref{P1(2)}). 
Moreover, the optimization variables in problem (P1) are intricately coupled in the constraints (\ref{P1(1)}), thus making problem (P1) challenging to solve. 
To address these issues, we propose an efficient algorithm in the next section to obtain its high-quality suboptimal solution. 

\section{Proposed Algorithm}
To solve problem (P1), we first introduce a set of slack variables $\boldsymbol{u_{\mathrm{ru}}} \triangleq \{u_{\mathrm{ru}}^{(n)},\forall n \in \mathcal{N}\}$. 
As such, we have 
\begingroup\makeatletter\def\f@size{8}\check@mathfonts
{\setlength\abovedisplayskip{1pt}
\setlength\belowdisplayskip{2pt}
\begin{subequations}\label{P2}
    \begin{align}
    {\text{{\fontsize{10pt}{12pt}\selectfont (P2)}}}\;\max \limits_{\boldsymbol{P_{\mathrm{r}}},\boldsymbol{\alpha},\boldsymbol{u_{\mathrm{ru}}}, \gamma_{\mathrm{br}}} &\;\;
    \overset{N}{\underset{n=1}{\sum}} \omega^{(n)} {\alpha_{\mathrm{ru}}^{(n)}B}\log_2\left(1+\frac{\left|h_{\mathrm{ru}}^{(n)}\right|^{2}P_{\mathrm{r}}^{(n)}}{\alpha_{\mathrm{ru}}^{(n)}B N_0}\right) \nonumber\\
    \textrm{s.t.}\quad\;\;\;
    &\overset{N}{\underset{n=1}{\sum}} u_{\mathrm{ru}}^{(n)} \leq \frac{\alpha_{\mathrm{br}}B\mu}{K}\left(a_1+\frac{a_2}{1+e^{-\left(c_1\gamma_{\mathrm{br}}+c_2\right)}}\right), \label{P2_(1)}\\
    &u_{\mathrm{ru}}^{(n)} = {\alpha_{\mathrm{ru}}^{(n)}B}\log_2\left(1+\frac{\left|h_{\mathrm{ru}}^{(n)}\right|^{2}P_{\mathrm{r}}^{(n)}}{\alpha_{\mathrm{ru}}^{(n)}B N_0}\right), \forall n \in \mathcal{N}, \label{P2_0_(2)}\\
    & \text{{\fontsize{10pt}{12pt}\selectfont (\ref{P1(2)}), (\ref{P1_a1})--(\ref{P1_P2}), (\ref{P2_(4)})}}. \nonumber
    \end{align}
\end{subequations}}
\endgroup

The solution to problem (P2) can be obtained by solving a relaxed optimization problem, given as follows.

\vspace{-1mm}
\begin{lemma}\label{lemma_P2}
    \rm{The solution to problem (P2) can be obtained by solving the following problem that relaxes the equality constraints in (\ref{P1(2)}), (\ref{P2_0_(2)}).}
    \begingroup\makeatletter\def\f@size{8}\check@mathfonts
    {\setlength\abovedisplayskip{1pt}
    \setlength\belowdisplayskip{2pt}
    \begin{subequations}\label{P3}
        \begin{align}
        {\text{{\fontsize{10pt}{12pt}\selectfont (P3)}}}\;\max \limits_{\boldsymbol{P_{\mathrm{r}}},\boldsymbol{\alpha},\boldsymbol{u_{\mathrm{ru}}}, \gamma_{\mathrm{br}}} &\;
        \overset{N}{\underset{n=1}{\sum}} \omega^{(n)} {\alpha_{\mathrm{ru}}^{(n)}B}\log_2\left(1+\frac{\left|h_{\mathrm{ru}}^{(n)}\right|^{2}P_{\mathrm{r}}^{(n)}}{\alpha_{\mathrm{ru}}^{(n)}B N_0}\right) \nonumber\\
        \textrm{s.t.}\quad\;\;\;
        &u_{\mathrm{ru}}^{(n)} \geq {\alpha_{\mathrm{ru}}^{(n)}B}\log_2\left(1+\frac{\left|h_{\mathrm{ru}}^{(n)}\right|^{2}P_{\mathrm{r}}^{(n)}}{\alpha_{\mathrm{ru}}^{(n)}B N_0}\right), \forall n \in \mathcal{N}, \label{P2_(2)}\\
        &\gamma_{\mathrm{br}} \leq 10\log_{10}\left(\frac{\left|h_{\mathrm{br}}\right|^{2}P_{\mathrm{b}}}{\alpha_{\text{br}}B N_0}\right), \label{P2_(3)} \\
        & {\text{{\fontsize{10pt}{12pt}\selectfont (\ref{P1_a1})--(\ref{P1_P2}), (\ref{P2_(4)}), (\ref{P2_(1)})}}}. \nonumber
        \end{align}
    \end{subequations}}
    \endgroup

\end{lemma}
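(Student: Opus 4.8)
The plan is to prove that (P3) is a \emph{tight} relaxation of (P2): the two problems have the same optimal value, and any optimal solution of (P3) can be converted into an optimal solution of (P2) by restoring the two relaxed constraints to equalities. One direction is immediate. Each equality in (\ref{P1(2)}) and (\ref{P2_0_(2)}) is replaced in (P3) by exactly the one-sided inequality that enlarges the feasible set, so every point feasible for (P2) is feasible for (P3) with the same objective value; hence the optimal value of (P3) is no smaller than that of (P2).

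For the reverse direction I would start from an arbitrary optimal solution $(\boldsymbol{P_{\mathrm{r}}}^{*},\boldsymbol{\alpha}^{*},\boldsymbol{u_{\mathrm{ru}}}^{*},\gamma_{\mathrm{br}}^{*})$ of (P3) and build a point that is feasible for (P2) with the same objective value. First, redefine the slack variables by setting $\hat u_{\mathrm{ru}}^{(n)}=\alpha_{\mathrm{ru}}^{(n)*}B\log_{2}\!\big(1+|h_{\mathrm{ru}}^{(n)}|^{2}P_{\mathrm{r}}^{(n)*}/(\alpha_{\mathrm{ru}}^{(n)*}BN_{0})\big)$ for all $n\in\mathcal{N}$, so that (\ref{P2_0_(2)}) now holds with equality. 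By (\ref{P2_(2)}) we have $\hat u_{\mathrm{ru}}^{(n)}\le u_{\mathrm{ru}}^{(n)*}$, so the left-hand side of (\ref{P2_(1)}) does not increase and (\ref{P2_(1)}) is preserved; since the objective does not depend on $\boldsymbol{u_{\mathrm{ru}}}$, its value is unchanged. Second, redefine $\hat\gamma_{\mathrm{br}}=10\log_{10}\!\big(|h_{\mathrm{br}}|^{2}P_{\mathrm{b}}/(\alpha_{\mathrm{br}}^{*}BN_{0})\big)$, so that (\ref{P1(2)}) holds with equality. Constraint (\ref{P2_(3)}) gives $\hat\gamma_{\mathrm{br}}\ge\gamma_{\mathrm{br}}^{*}$, hence $\hat\gamma_{\mathrm{br}}\ge\bar\gamma$ by (\ref{P2_(4)}); moreover, because $\varepsilon_{K}(\cdot)$ in (\ref{sesi}) is increasing in $\gamma_{\mathrm{br}}$ (the `S'-shaped fit has $c_{1}>0$ and $a_{2}>0$), the right-hand side of (\ref{P2_(1)}) only grows when $\gamma_{\mathrm{br}}^{*}$ is raised to $\hat\gamma_{\mathrm{br}}$, so (\ref{P2_(1)}) still holds, and the objective again does not involve $\gamma_{\mathrm{br}}$. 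The remaining constraints (\ref{P1_a1})--(\ref{P1_P2}) contain neither $\boldsymbol{u_{\mathrm{ru}}}$ nor $\gamma_{\mathrm{br}}$ and are untouched. Hence the modified tuple, with $\boldsymbol{P_{\mathrm{r}}}^{*},\boldsymbol{\alpha}^{*}$ unchanged and the slack and SNR variables replaced by $\hat u_{\mathrm{ru}}^{(n)}$ and $\hat\gamma_{\mathrm{br}}$, is feasible for (P2) with the same objective, so the optimal value of (P2) is no smaller than that of (P3).

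Combining the two inequalities yields equality of the optimal values, and the explicit construction above maps any optimal solution of (P3) to an optimal solution of (P2), which is precisely the statement of the lemma. I expect the only point needing care to be the monotonicity argument: one must first confirm that $\gamma_{\mathrm{br}}$ enters problem (P2) only through (\ref{P2_(1)}) and (\ref{P2_(4)}) and that $\varepsilon_{K}$ is monotone, so that inflating $\gamma_{\mathrm{br}}$ up to its upper envelope can never violate a constraint; the remainder is just bookkeeping about the direction in which each relaxed inequality slackens.
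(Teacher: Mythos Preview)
Your proposal is correct, and it actually takes a cleaner route than the paper. The paper argues by perturbing the \emph{primary} variables: if (\ref{P2_(2)}) is strict for some user $\tilde n$, it increases $\alpha_{\mathrm{ru}}^{(\tilde n)}$ to raise the right-hand side (and the objective) until the gap closes, then, if the bandwidth budget (\ref{P1_a1}) is violated, decreases $\alpha_{\mathrm{br}}$ or $u_{\mathrm{ru}}^{(\tilde n)}$ to restore feasibility, concluding that the objective is non-decreasing; the tightening of (\ref{P2_(3)}) is handled ``similarly.'' Your argument instead leaves $\boldsymbol{P_{\mathrm{r}}}^{*}$ and $\boldsymbol{\alpha}^{*}$ untouched and resets only the \emph{auxiliary} variables $u_{\mathrm{ru}}^{(n)}$ and $\gamma_{\mathrm{br}}$ to their defining expressions. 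This works because these variables do not appear in the objective, and the only other constraints that contain them, namely (\ref{P2_(1)}) and (\ref{P2_(4)}), are preserved: the left-hand side of (\ref{P2_(1)}) can only decrease while its right-hand side can only increase by monotonicity of $\varepsilon_K$, and $\hat\gamma_{\mathrm{br}}\ge\gamma_{\mathrm{br}}^{*}\ge\bar\gamma$. The advantage of your approach is that it sidesteps the need to verify that perturbing $\alpha_{\mathrm{ru}}^{(\tilde n)}$ and $\alpha_{\mathrm{br}}$ does not break the coupled constraints (\ref{P2_(1)}) and (\ref{P2_(3)}); the paper's approach, by contrast, has the side benefit of hinting at which constraints are active at optimality, but for establishing the lemma your construction is the more transparent one.
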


\begin{proof} 
If the optimal solution to (P3) satisfies the constraints in (\ref{P2_(2)}) with strict inequality for $\tilde{n}$-th user, we can increase the corresponding SemRelay-user ratio of total bandwidth $\alpha_{\mathrm{ru}}^{(\tilde{n})}$ to reduce the gap between the right-hand-sides (RHS) of the constraint (\ref{P2_(2)}) and $u_{\mathrm{ru}}^{(\tilde{n})}$, followed by decreasing $\alpha_{\mathrm{br}}$ or $u_{\mathrm{ru}}^{(\tilde{n})}$ to re-satisfy the constraints (\ref{P1_a1}) and (\ref{P2_(1)}) if the constraint (\ref{P1_a1}) becomes inactive. 
Note that the objective value of (P3) is non-decreasing, hence there exists an optimal solution to (P3) where all the constraints in (\ref{P2_(2)}) are satisfied with equality. 
Therefore, the optimal value of (P3) is equal to that of (P2).
The constraint (\ref{P2_(3)}) can also be proved similarly.
\end{proof}
\vspace{-2mm}

\addtolength{\topmargin}{0.01 in}
However, problem (P3) is still a non-convex problem due to the constraints in (\ref{P2_(1)}), (\ref{P2_(2)}), and (\ref{P2_(3)}). 
To tackle this issue, we proposed a BCD-based algorithm that splits the optimization variables of the problem into two blocks: 1) SemRelay transmit power allocation $\boldsymbol{P_{\mathrm{r}}}$, and 2) system bandwidth allocation $\boldsymbol{\alpha}$. 
Subsequently, these two blocks are optimized alternately and iteratively with the other one being fixed, until the convergence is achieved. 

\subsection{SemRelay Transmit Power Allocation} \label{power allocation}
For any given bandwidth allocation $\boldsymbol{\alpha}$, problem (P3) reduces to the following problem for optimizing the SemRelay transmit power allocation 
\begingroup\makeatletter\def\f@size{8}\check@mathfonts
{\setlength\abovedisplayskip{1pt}
\setlength\belowdisplayskip{2pt}
\begin{subequations}\label{P4}
    \begin{align}
    {\text{{\fontsize{10pt}{12pt}\selectfont (P4)}}}\max \limits_{\boldsymbol{P_{\mathrm{r}}},\boldsymbol{u_{\mathrm{ru}}}, \gamma_{\mathrm{br}}} \;&\;
    \overset{N}{\underset{n=1}{\sum}} \omega^{(n)} {\alpha_{\mathrm{ru}}^{(n)}B}\log_2\left(1+\frac{\left|h_{\mathrm{ru}}^{(n)}\right|^{2}P_{\mathrm{r}}^{(n)}}{\alpha_{\mathrm{ru}}^{(n)}B N_0}\right) \nonumber\\
    \textrm{s.t.}\quad\;\;
    & \text{{\fontsize{10pt}{12pt}\selectfont (\ref{P1_P1}), (\ref{P1_P2}), (\ref{P2_(4)}), (\ref{P2_(1)}), (\ref{P2_(2)}), (\ref{P2_(3)})}}. \nonumber
    \end{align}
\end{subequations}}
\endgroup

Although problem (P4) is a non-convex optimization problem due to the non-convex constraints in (\ref{P2_(1)}), (\ref{P2_(2)}), we can address them by using the successive convex approximation (SCA) method, as elaborated below.

\vspace{-0.1 in}
\begin{lemma} \label{lower-bound R_br(power)}
    \rm{For the constraint (\ref{P2_(1)})}, $R_{\mathrm{br}}$ is a convex function of $(1+e^{-(c_1\gamma_{\mathrm{br}}+c_2)})$.
    For any local point $\tilde{\gamma}_{\mathrm{br}}$, $R_{\mathrm{br}}$ can be lower-bounded as
    {\setlength\abovedisplayskip{1pt}
    \setlength\belowdisplayskip{2pt}
    \begin{equation} 
        \begin{aligned}
            R_{\mathrm{br}} &\geq E_1-E_2(e^{-(c_1\gamma_{\mathrm{br}}+c_2)}-e^{-(c_1\tilde{\gamma}_{\mathrm{br}}+c_2)}) \\ 
            &\triangleq  R_{\mathrm{br}\mathrm{(lb)}}, \label{lemma1}
        \end{aligned}
    \end{equation}}%
    where the coefficients $E_{1}$ and $E_{2}$ are respectively given by 
    $E_{1} = \frac{\alpha_{\mathrm{br}}B\mu}{K}\left(a_1+\frac{a_2}{1+e^{-\left(c_1\tilde{\gamma}_{\mathrm{br}}+c_2\right)}}\right)$, and 
    $E_{2} = \frac{\alpha_{\mathrm{br}}B\mu}{K}\\
    \left(\frac{a_2}{(1+e^{-(c_1\tilde{\gamma}_{\mathrm{br}}+c_2)})^2}\right)$.
\end{lemma}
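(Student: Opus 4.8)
The plan is to recognize $R_{\mathrm{br}}$, for a fixed bandwidth split, as a simple one-dimensional function of the auxiliary quantity $t \triangleq 1 + e^{-(c_1\gamma_{\mathrm{br}}+c_2)}$ and then apply the tangent-line (first-order) underestimator of a convex function. First I would recall from (\ref{R_br}) and (\ref{sesi}) that $R_{\mathrm{br}} = \frac{\alpha_{\mathrm{br}}B\mu}{K}\bigl(a_1+\frac{a_2}{1+e^{-(c_1\gamma_{\mathrm{br}}+c_2)}}\bigr) = C\bigl(a_1 + \frac{a_2}{t}\bigr)$, where in the power-allocation subproblem (P4) the bandwidth allocation $\boldsymbol{\alpha}$ — hence $\alpha_{\mathrm{br}}$ and the constant $C \triangleq \frac{\alpha_{\mathrm{br}}B\mu}{K} > 0$ — is held fixed. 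Since $\varepsilon_K(\cdot)$ is an increasing `S'-shaped curve, its fitted amplitude obeys $a_2 > 0$, and $t = 1 + e^{-(c_1\gamma_{\mathrm{br}}+c_2)} > 1 > 0$ always; differentiating $g(t) \triangleq C(a_1 + a_2/t)$ twice gives $g''(t) = 2Ca_2/t^3 > 0$ on this range, so $R_{\mathrm{br}}$ is indeed convex in $t = 1 + e^{-(c_1\gamma_{\mathrm{br}}+c_2)}$.

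Then I would invoke the standard property that a differentiable convex function lies above each of its tangents: $g(t) \ge g(\tilde{t}) + g'(\tilde{t})(t - \tilde{t})$ for all $t$ and every expansion point $\tilde{t}$. Choosing $\tilde{t} \triangleq 1 + e^{-(c_1\tilde{\gamma}_{\mathrm{br}}+c_2)}$ associated with the local point $\tilde{\gamma}_{\mathrm{br}}$, I would compute $g(\tilde{t}) = C\bigl(a_1 + \frac{a_2}{1+e^{-(c_1\tilde{\gamma}_{\mathrm{br}}+c_2)}}\bigr) = E_1$ and $g'(\tilde{t}) = -Ca_2/\tilde{t}^2 = -C\frac{a_2}{(1+e^{-(c_1\tilde{\gamma}_{\mathrm{br}}+c_2)})^2} = -E_2$, together with $t - \tilde{t} = e^{-(c_1\gamma_{\mathrm{br}}+c_2)} - e^{-(c_1\tilde{\gamma}_{\mathrm{br}}+c_2)}$. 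Substituting these three quantities into the tangent inequality reproduces exactly $R_{\mathrm{br}} \ge E_1 - E_2\bigl(e^{-(c_1\gamma_{\mathrm{br}}+c_2)} - e^{-(c_1\tilde{\gamma}_{\mathrm{br}}+c_2)}\bigr) = R_{\mathrm{br}\mathrm{(lb)}}$, which is the claimed bound, and it holds with equality at $\gamma_{\mathrm{br}} = \tilde{\gamma}_{\mathrm{br}}$.

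I do not expect a genuine obstacle here; the only points needing care are (i) confirming $a_2 > 0$ from the increasing `S'-shaped behaviour of $\varepsilon_K$, so that the convexity of $g$ is nonvacuous, and (ii) treating $\alpha_{\mathrm{br}}$ (hence $C$, $E_1$, $E_2$) as constants, which is legitimate precisely because the present lemma is used only within the power-allocation block of the BCD iteration. Finally I would add a brief remark that this underestimator is the one relevant for SCA: $-E_2 e^{-(c_1\gamma_{\mathrm{br}}+c_2)}$ is a negative multiple of the exponential of an affine map of $\gamma_{\mathrm{br}}$ and is therefore concave in $\gamma_{\mathrm{br}}$, so replacing $R_{\mathrm{br}}$ by $R_{\mathrm{br}\mathrm{(lb)}}$ in (\ref{P2_(1)}) yields a convex restriction of the feasible set that still contains $\tilde{\gamma}_{\mathrm{br}}$, which is what enables the iterative convex updates.
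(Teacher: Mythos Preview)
Your argument is correct and follows essentially the same approach as the paper: observe that $R_{\mathrm{br}}$ is convex in the auxiliary variable $t=1+e^{-(c_1\gamma_{\mathrm{br}}+c_2)}$ and then apply the first-order Taylor underestimator of a convex function to obtain the SCA lower bound. Your write-up is in fact more explicit than the paper's proof, which simply asserts the convexity in $t$ and invokes the first-order Taylor expansion without carrying out the derivative computations or the identification of $E_1,E_2$.
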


\begin{proof}
    Although the RHS of the constraint (\ref{P2_(1)}) is a non-convex function with respect to $\gamma_{\mathrm{br}}$, it is a convex function with respect to $(1+e^{-(c_1\gamma_{\mathrm{br}}+c_2)})$. 
    Since the first-order Taylor expansion of a convex function is its global under-estimator \cite{boyd2004convex}, we can use the SCA method to obtain its lower bound by the first-order Taylor expansion, and thus get the convex constraint with respect to $\gamma_{\mathrm{br}}$. 
\end{proof}

\vspace{-0.2 in}

\begin{lemma} \label{upper-bound R_ru(power)}
    \rm{For the constraint (\ref{P2_(2)})}, $R_{\mathrm{ru}}^{(n)}$ is a concave function of $P_{\mathrm{r}}^{(n)}$. 
    For any local point $\tilde{P}_{\mathrm{r}}^{(n)}$, $R_{\mathrm{ru}}^{(n)}$ can be upper-bounded as
    {\setlength\abovedisplayskip{1pt}
    \setlength\belowdisplayskip{2pt}
    \begin{equation} 
            R_{\mathrm{ru}}^{(n)} \leq E_{3}+E_{4}(P_{\mathrm{r}}^{(n)}-\tilde{P}_{\mathrm{r}}^{(n)}) \triangleq \phi_{\mathrm{(up)}}^{(n)}, \label{lemma2}
    \end{equation}}%
    where the coefficients $E_{3}$ and $E_{4}$ are respectively given by 
    $E_{3} = \alpha_{\mathrm{ru}}^{(n)}B\log_{2}(1+\frac{\left|h_{\mathrm{ru}}^{(n)}\right|^{2}\tilde{P}_{\mathrm{r}}^{(n)}}{\alpha_{\mathrm{ru}}^{(n)}B N_0})$, 
    and 
    $E_{4} = \frac{\alpha_{\mathrm{ru}}^{(n)}B\left|h_{\mathrm{ru}}^{(n)}\right|^{2}}{(\alpha_{\mathrm{ru}}^{(n)}BN_{0}+\left|h_{\mathrm{ru}}^{(n)}\right|^{2} \tilde{P}_{\mathrm{r}}^{(n)})\ln{2}}$.
\end{lemma}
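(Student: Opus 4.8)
The plan is to mirror the argument already used for $R_{\mathrm{br}}$ in Lemma~\ref{lower-bound R_br(power)}, except that here we exploit \emph{concavity} instead of convexity and therefore obtain an \emph{upper} bound rather than a lower one. First I would recall that in this block of the BCD procedure the bandwidth allocation $\boldsymbol{\alpha}$ is fixed, so for each $n$ the channel-to-noise ratio $a_n \triangleq |h_{\mathrm{ru}}^{(n)}|^{2}/(\alpha_{\mathrm{ru}}^{(n)}BN_0) > 0$ is a constant and $R_{\mathrm{ru}}^{(n)} = \alpha_{\mathrm{ru}}^{(n)}B\log_2(1+a_n P_{\mathrm{r}}^{(n)})$ is a function of the single variable $P_{\mathrm{r}}^{(n)}$. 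Since $\log_2(\cdot)$ is concave and non-decreasing, its composition with the affine map $P_{\mathrm{r}}^{(n)}\mapsto 1+a_n P_{\mathrm{r}}^{(n)}$ is concave, and scaling by the positive constant $\alpha_{\mathrm{ru}}^{(n)}B$ preserves concavity; equivalently, one checks directly that $\mathrm{d}^{2}R_{\mathrm{ru}}^{(n)}/\mathrm{d}(P_{\mathrm{r}}^{(n)})^{2} = -\alpha_{\mathrm{ru}}^{(n)}Ba_n^{2}/\big((1+a_n P_{\mathrm{r}}^{(n)})^{2}\ln 2\big) < 0$. This establishes the concavity claim.

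Next I would invoke the standard fact \cite{boyd2004convex} that a differentiable concave function lies below its first-order Taylor expansion at any point, so that for any local point $\tilde{P}_{\mathrm{r}}^{(n)}$,
\[
R_{\mathrm{ru}}^{(n)} \leq \left.R_{\mathrm{ru}}^{(n)}\right|_{P_{\mathrm{r}}^{(n)}=\tilde{P}_{\mathrm{r}}^{(n)}} + \left.\frac{\mathrm{d}R_{\mathrm{ru}}^{(n)}}{\mathrm{d}P_{\mathrm{r}}^{(n)}}\right|_{P_{\mathrm{r}}^{(n)}=\tilde{P}_{\mathrm{r}}^{(n)}}\big(P_{\mathrm{r}}^{(n)}-\tilde{P}_{\mathrm{r}}^{(n)}\big).
\]
It then only remains to identify the two coefficients. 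Evaluating $R_{\mathrm{ru}}^{(n)}$ at $\tilde{P}_{\mathrm{r}}^{(n)}$ gives exactly $E_3$. Differentiating yields $\mathrm{d}R_{\mathrm{ru}}^{(n)}/\mathrm{d}P_{\mathrm{r}}^{(n)} = \alpha_{\mathrm{ru}}^{(n)}Ba_n/\big((1+a_n P_{\mathrm{r}}^{(n)})\ln 2\big)$, and substituting back $a_n = |h_{\mathrm{ru}}^{(n)}|^{2}/(\alpha_{\mathrm{ru}}^{(n)}BN_0)$ and cancelling the common factor $\alpha_{\mathrm{ru}}^{(n)}BN_0$ from numerator and denominator collapses the derivative at $\tilde{P}_{\mathrm{r}}^{(n)}$ to $\alpha_{\mathrm{ru}}^{(n)}B|h_{\mathrm{ru}}^{(n)}|^{2}/\big((\alpha_{\mathrm{ru}}^{(n)}BN_0+|h_{\mathrm{ru}}^{(n)}|^{2}\tilde{P}_{\mathrm{r}}^{(n)})\ln 2\big) = E_4$. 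Plugging these in gives the claimed inequality $R_{\mathrm{ru}}^{(n)} \leq E_3 + E_4(P_{\mathrm{r}}^{(n)}-\tilde{P}_{\mathrm{r}}^{(n)}) = \phi_{\mathrm{(up)}}^{(n)}$.

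Finally I would note why this is the bound we want: replacing the concave right-hand side of constraint (\ref{P2_(2)}) by the affine over-estimator $\phi_{\mathrm{(up)}}^{(n)}$ converts $u_{\mathrm{ru}}^{(n)} \geq R_{\mathrm{ru}}^{(n)}$ into the affine, hence convex, constraint $u_{\mathrm{ru}}^{(n)} \geq \phi_{\mathrm{(up)}}^{(n)}$, and since $\phi_{\mathrm{(up)}}^{(n)} \geq R_{\mathrm{ru}}^{(n)}$ this is a conservative (inner) approximation of the original non-convex feasible region, so SCA iterates stay feasible for problem (P4). There is essentially no hard step here --- the lemma is the concave counterpart of Lemma~\ref{lower-bound R_br(power)}; the only points deserving a moment of care are verifying that the first-order term simplifies exactly to the stated $E_4$ and being explicit that the bound is legitimate only because $\boldsymbol{\alpha}$, and therefore the per-user constant $a_n$, is held fixed in this block of the alternating optimization.
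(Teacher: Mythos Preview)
Your proposal is correct and matches the paper's approach exactly: the paper does not even supply a separate proof for this lemma, implicitly treating it as the concave analogue of Lemma~\ref{lower-bound R_br(power)} (first-order Taylor expansion of a concave function as a global over-estimator), which is precisely the argument you have written out in full with the derivative computation verifying $E_3$ and $E_4$.
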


Based on Lemmas \ref{lower-bound R_br(power)} and \ref{upper-bound R_ru(power)}, 
by replacing the RHS of (\ref{P2_(1)}), (\ref{P2_(2)}) with their corresponding lower or upper bounds, i.e., $R_{\mathrm{br}\mathrm{(lb)}}$ in (\ref{lemma1}) and $\phi_{\mathrm{(up)}}^{(n)}$ in (\ref{lemma2}), respectively, problem (P4) can be transformed into the following approximate form
\begingroup\makeatletter\def\f@size{8}\check@mathfonts
{\setlength\abovedisplayskip{1pt}
\setlength\belowdisplayskip{2pt}
\begin{subequations} \label{P5}
    \begin{align}
        {\text{{\fontsize{10pt}{12pt}\selectfont (P5)}}}\max \limits_{\boldsymbol{P_{\mathrm{r}}},\boldsymbol{u_{\mathrm{ru}}}, \gamma_{\mathrm{br}}} &\;\;
        \overset{N}{\underset{n=1}{\sum}} \omega^{(n)} {\alpha_{\mathrm{ru}}^{(n)}B}\log_2\left(1+\frac{\left|h_{\mathrm{ru}}^{(n)}\right|^{2}P_{\mathrm{r}}^{(n)}}{\alpha_{\mathrm{ru}}^{(n)}B N_0}\right) \nonumber\\
        \textrm{s.t.}\quad\;
        &\overset{N}{\underset{n=1}{\sum}} u_{\mathrm{ru}}^{(n)} \leq R_{\mathrm{br}\mathrm{(lb)}}, \\
        &u_{\mathrm{ru}}^{(n)} \geq \phi_{\mathrm{(up)}}^{(n)}, \forall n \in \mathcal{N},\\
        & \text{{\fontsize{10pt}{12pt}\selectfont (\ref{P1_P1}), (\ref{P1_P2}), (\ref{P2_(4)}), (\ref{P2_(3)})}}. \nonumber
    \end{align}
\end{subequations}}
\endgroup

Problem (P5) is now a convex optimization problem, which can be efficiently solved by using CVX solvers \cite{cvx}.

\subsection{Bandwidth Allocation}
For any given SemRelay power $\boldsymbol{P_{\mathrm{r}}}$, problem (P3) reduces to the following problem for optimizing the bandwidth allocation
\begingroup\makeatletter\def\f@size{8}\check@mathfonts
{\setlength\abovedisplayskip{1pt}
\setlength\belowdisplayskip{2pt}
\begin{subequations} 
    \begin{align}
    {\text{{\fontsize{10pt}{12pt}\selectfont (P6)}}} \max \limits_{\boldsymbol{\alpha},\boldsymbol{u_{\text{ru}}}, \gamma_{\text{br}}} \;&\;
    \overset{N}{\underset{n=1}{\sum}} \omega^{(n)} {\alpha_{\mathrm{ru}}^{(n)}B}\log_2\left(1+\frac{\left|h_{\mathrm{ru}}^{(n)}\right|^{2}P_{\mathrm{r}}^{(n)}}{\alpha_{\mathrm{ru}}^{(n)}B N_0}\right) \nonumber\\
    \textrm{s.t.}\quad\; 
    & \text{{\fontsize{10pt}{12pt}\selectfont (\ref{P1_a1}), (\ref{P1_a2}), (\ref{P2_(4)}), (\ref{P2_(1)}), (\ref{P2_(2)}), (\ref{P2_(3)})}}. \nonumber
    \end{align}
\end{subequations}}
\endgroup

Problem (P6) is a non-convex optimization problem due to the non-convex constraints in (\ref{P2_(1)}), (\ref{P2_(2)}), and (\ref{P2_(3)}). 
Specifically, the constraint (\ref{P2_(1)}) exhibits strong coupling between the optimization variables $\alpha_{\mathrm{br}}$ and $\gamma_{\mathrm{br}}$, rendering it challenging to solve this problem. 
To deal with this issue, we first introduce an auxiliary variable 
{\setlength\abovedisplayskip{1pt}
\setlength\belowdisplayskip{2pt}
\begin{equation}
    S = a_1+\frac{a_2}{1+e^{-\left(c_1\gamma_{\text{br}}+c_2\right)}},  \label{eq_S}
\end{equation}}%
\noindent and then equivalently rewrite the constraint (\ref{P2_(1)}) as 
\begingroup\makeatletter\def\f@size{8}\check@mathfonts
{\setlength\abovedisplayskip{1pt}
\setlength\belowdisplayskip{2pt}
\begin{equation} 
    \overset{N}{\underset{n=1}{\sum}} u_{\text{ru}}^{(n)} \le \frac{\alpha_{\text{br}}B\mu S}{K} = \frac{B\mu}{4K}\left((\alpha_{\text{br}}+S)^2-(\alpha_{\text{br}}-S)^2\right). \label{eq_a_S}
\end{equation}}
\endgroup
Although (\ref{eq_a_S}) is still a non-convex constraint, we can address it by using the SCA method, with similar proof to Lemma 2. 

\vspace{-0.1 in}
\begin{lemma} \label{relax R_br(alpha)}
    \rm{For the constraint (\ref{eq_a_S})}, $\delta \triangleq (\alpha_{\mathrm{br}}+S)^2$ is a convex function of $\alpha_{\mathrm{br}}$ and $S$.
    Given any local points $\tilde{\alpha}_{\mathrm{br}}$ and $\tilde{S}$, $\delta$ can be lower-bounded as
    {\setlength\abovedisplayskip{1pt}
    \setlength\belowdisplayskip{2pt}
    \begin{equation}
    	\delta \geq -(\tilde{\alpha}_{\mathrm{br}}+\tilde{S})^2+2(\tilde{\alpha}_{\mathrm{br}}+\tilde{S})(\alpha_{\mathrm{br}}+S) \triangleq \delta_{\mathrm{(lb)}}.
    \end{equation}}
\end{lemma}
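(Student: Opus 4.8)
The plan is to follow the same two-move template as Lemma~\ref{lower-bound R_br(power)}: first establish that $\delta$ is convex in $(\alpha_{\mathrm{br}},S)$, and then invoke the standard fact \cite{boyd2004convex} that a convex function lies above its first-order Taylor expansion at any point. For convexity I would note that $(\alpha_{\mathrm{br}},S)\mapsto\alpha_{\mathrm{br}}+S$ is affine and $t\mapsto t^{2}$ is convex, so the composition $\delta=(\alpha_{\mathrm{br}}+S)^{2}$ is convex; equivalently, the Hessian of $\delta$ has all four entries equal to $2$ and is therefore positive semidefinite (eigenvalues $0$ and $4$).

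Next I would write down the first-order Taylor expansion explicitly. Since $\partial\delta/\partial\alpha_{\mathrm{br}}=\partial\delta/\partial S=2(\alpha_{\mathrm{br}}+S)$, both partials equal $2(\tilde{\alpha}_{\mathrm{br}}+\tilde{S})$ at the local point, and the under-estimator inequality reads $\delta\ge(\tilde{\alpha}_{\mathrm{br}}+\tilde{S})^{2}+2(\tilde{\alpha}_{\mathrm{br}}+\tilde{S})\bigl[(\alpha_{\mathrm{br}}-\tilde{\alpha}_{\mathrm{br}})+(S-\tilde{S})\bigr]$. Collecting the bracketed terms into $(\alpha_{\mathrm{br}}+S)-(\tilde{\alpha}_{\mathrm{br}}+\tilde{S})$ and simplifying gives $\delta\ge-(\tilde{\alpha}_{\mathrm{br}}+\tilde{S})^{2}+2(\tilde{\alpha}_{\mathrm{br}}+\tilde{S})(\alpha_{\mathrm{br}}+S)=\delta_{\mathrm{(lb)}}$, which is exactly the stated bound and is moreover affine in $(\alpha_{\mathrm{br}},S)$.

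Finally I would connect this back to the constraint (\ref{eq_a_S}): replacing $(\alpha_{\mathrm{br}}+S)^{2}$ by $\delta_{\mathrm{(lb)}}$ on the right-hand side turns it into $\sum_{n=1}^{N}u_{\mathrm{ru}}^{(n)}\le\frac{B\mu}{4K}\bigl(\delta_{\mathrm{(lb)}}-(\alpha_{\mathrm{br}}-S)^{2}\bigr)$, whose right-hand side is an affine term minus a convex quadratic, hence concave, so the resulting constraint is convex; and because $\delta_{\mathrm{(lb)}}\le\delta$ the substitution tightens the constraint, so every iterate produced by the SCA step stays feasible for (P6). I do not anticipate a genuine obstacle here: the computation is essentially a one-line Taylor expansion, and the only point deserving explicit mention is that the replacement is a restriction rather than a relaxation, which is what legitimises the SCA step.
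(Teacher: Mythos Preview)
Your proposal is correct and follows exactly the approach the paper intends: the paper does not give a separate proof for this lemma but simply states that it is handled ``by using the SCA method, with similar proof to Lemma~\ref{lower-bound R_br(power)},'' i.e., convexity plus the first-order Taylor under-estimator. Your explicit Hessian/composition argument and the algebraic simplification to $\delta_{\mathrm{(lb)}}$ are precisely the details the paper leaves implicit.
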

\noindent Using Lemma \ref{relax R_br(alpha)}, the constraint (\ref{P2_(1)}) can be transformed as
\begingroup\makeatletter\def\f@size{8}\check@mathfonts
{\setlength\abovedisplayskip{1pt}
\setlength\belowdisplayskip{2pt}
\begin{equation}
    \overset{N}{\underset{n=1}{\sum}} u_{\mathrm{ru}}^{(n)} \le \frac{B\mu}{4K}(\delta_{\mathrm{(lb)}}-(\alpha_{\mathrm{br}}-S)^2). \label{P6_(1)}
\end{equation}}
\endgroup

Next, for the non-affine equality constraint (\ref{eq_S}), we can first relax it as
{\setlength\abovedisplayskip{1pt}
\setlength\belowdisplayskip{2pt}
\begin{equation}
    S \leq a_1+\frac{a_2}{1+e^{-\left(c_1\gamma_{\mathrm{br}}+c_2\right)}}. \label{relax_S}
\end{equation}}%
\noindent One can observe that, given $\boldsymbol{P_{\mathrm{r}}}$, the non-convex constraint (\ref{relax_S}) shares a similar form with the RHS of the constraint (\ref{P2_(1)}).
Based on Lemma \ref{lower-bound R_br(power)}, we can transform the constraint (\ref{relax_S}) as 
{\setlength\abovedisplayskip{1pt}
\setlength\belowdisplayskip{2pt}
\begin{equation} 
        S \leq E_{5}-E_{6}(e^{-(c_1\gamma_{\mathrm{br}}+c_2)}-e^{-(c_1\tilde{\gamma}_{\mathrm{br}}+c_2)}), \label{cons_S}
\end{equation}}%
where the coefficients $E_{5}$ and $E_{6}$ are respectively given by 
$ E_{5} = a_1+\frac{a_2}{1+e^{-\left(c_1\tilde{\gamma}_{\mathrm{br}}+c_2\right)}}$,
and 
$ E_{6} = \frac{a_2}{(1+e^{-(c_1\tilde{\gamma}_{\mathrm{br}}+c_2)})^2}$.

Moreover, for the non-convex constraints (\ref{P2_(2)}) and (\ref{P2_(3)}), they can also be efficiently addressed by using the SCA method.

\begin{lemma} \label{relax R_ru(alpha)}
    \rm{For the constraint (\ref{P2_(2)})}, its RHS $R_{\mathrm{ru}}^{(n)}$ is a concave function with respect to $\alpha_{\mathrm{ru}}^{(n)}$, which can be upper-bounded as
    {\setlength\abovedisplayskip{1pt}
    \setlength\belowdisplayskip{2pt}
    \begin{equation}
            R_{\mathrm{ru}}^{(n)} \leq E_{7}+E_{8}(\alpha_{\mathrm{ru}}^{(n)}-\tilde{\alpha}_{\mathrm{ru}}^{(n)}) \triangleq R_{\mathrm{ru}\mathrm{(up)}}^{(n)}, \label{cons_R_ru}    
    \end{equation}}%
    where the coefficients $E_{7}$ and $E_{8}$ are respectively given by 
    \begingroup\makeatletter\def\f@size{9}\check@mathfonts
    $E_{7} = \tilde{\alpha}_{\mathrm{ru}}^{(n)}B\log_2(1+\frac{\left|h_{\mathrm{ru}}^{(n)}\right|^{2}P_{\mathrm{r}}^{(n)}}{\tilde{\alpha}_{\mathrm{ru}}^{(n)}B N_0})$, 
    \endgroup
    and 
    \begingroup\makeatletter\def\f@size{9}\check@mathfonts
    {$E_{8} = B\log_2(1+\frac{\left|h_{\mathrm{ru}}^{(n)}\right|^{2}P_{\mathrm{r}}^{(n)}}{\tilde{\alpha}_{\mathrm{ru}}^{(n)}B N_0}) - \frac{\log_2(e) B \left|h_{\mathrm{ru}}^{(n)}\right|^{2}P_{\mathrm{r}}^{(n)}}{\tilde{\alpha}_{\mathrm{ru}}^{(n)}B N_0+\left|h_{\mathrm{ru}}^{(n)}\right|^{2}P_{\mathrm{r}}^{(n)}}$}.
    \endgroup
\end{lemma}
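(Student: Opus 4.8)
The plan is to follow the same two-step template used in the proof of Lemma~\ref{upper-bound R_ru(power)}: first show that the RHS of (\ref{P2_(2)}), regarded as a function of $\alpha_{\mathrm{ru}}^{(n)}$ alone with $P_{\mathrm{r}}^{(n)}$ held fixed, is concave on $\alpha_{\mathrm{ru}}^{(n)}>0$; then invoke the standard fact that a differentiable concave function lies below each of its first-order Taylor expansions \cite{boyd2004convex}, so that $R_{\mathrm{ru}}^{(n)}$ is globally upper-bounded by its linearization about the local point $\tilde{\alpha}_{\mathrm{ru}}^{(n)}$.

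For the concavity claim I would write $R_{\mathrm{ru}}^{(n)} = \frac{B}{\ln 2}\,\alpha_{\mathrm{ru}}^{(n)}\ln(1+\beta_n/\alpha_{\mathrm{ru}}^{(n)})$ with the positive constant $\beta_n \triangleq |h_{\mathrm{ru}}^{(n)}|^{2}P_{\mathrm{r}}^{(n)}/(BN_0)$, and recognize $\alpha_{\mathrm{ru}}^{(n)}\ln(1+\beta_n/\alpha_{\mathrm{ru}}^{(n)})$ as the perspective of the concave map $t\mapsto\ln(1+t)$; since the perspective operation preserves concavity and scaling by $B/\ln2>0$ does too, concavity in $\alpha_{\mathrm{ru}}^{(n)}$ follows. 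Equivalently, a direct computation gives the second derivative $-\frac{B}{\ln2}\cdot\frac{\beta_n^{2}}{\alpha_{\mathrm{ru}}^{(n)}(\alpha_{\mathrm{ru}}^{(n)}+\beta_n)^{2}}\le 0$ on the feasible range, which establishes the same thing.

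It then remains to match the affine bound in (\ref{cons_R_ru}) with the first-order Taylor expansion. Evaluating $R_{\mathrm{ru}}^{(n)}$ at $\tilde{\alpha}_{\mathrm{ru}}^{(n)}$ reproduces $E_7$; differentiating with respect to $\alpha_{\mathrm{ru}}^{(n)}$ yields $B\log_2(1+\beta_n/\alpha_{\mathrm{ru}}^{(n)})-\log_2(e)\cdot\frac{B\beta_n}{\alpha_{\mathrm{ru}}^{(n)}+\beta_n}$, which, after substituting back $\beta_n$ and evaluating at $\tilde{\alpha}_{\mathrm{ru}}^{(n)}$, is exactly $E_8$. Concavity then gives $R_{\mathrm{ru}}^{(n)}\le E_7+E_8(\alpha_{\mathrm{ru}}^{(n)}-\tilde{\alpha}_{\mathrm{ru}}^{(n)})$, as stated. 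There is no genuinely hard step; the only point needing a word of care is the domain, since (\ref{P1_a2}) only imposes $\alpha_{\mathrm{ru}}^{(n)}\ge 0$ while the derivative is unbounded as $\alpha_{\mathrm{ru}}^{(n)}\to 0^{+}$ — this is harmless because the SCA linearization point $\tilde{\alpha}_{\mathrm{ru}}^{(n)}$ produced by the algorithm is interior, and at $\alpha_{\mathrm{ru}}^{(n)}=0$ one reads $R_{\mathrm{ru}}^{(n)}=0$ by continuity. Otherwise the argument is a direct transcription of the $P_{\mathrm{r}}^{(n)}$-version in Lemma~\ref{upper-bound R_ru(power)}.
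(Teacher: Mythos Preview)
Your proposal is correct and matches the paper's intended approach: the paper does not spell out a separate proof for this lemma but treats it as a direct analogue of Lemma~\ref{upper-bound R_ru(power)} handled by the SCA template of Lemma~\ref{lower-bound R_br(power)}, namely concavity in the relevant variable followed by the first-order Taylor over-estimator for concave functions \cite{boyd2004convex}. Your perspective-function argument (or the equivalent second-derivative check) and the explicit identification of $E_7,E_8$ as the value and derivative at $\tilde{\alpha}_{\mathrm{ru}}^{(n)}$ are exactly what the paper leaves implicit.
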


\vspace{-0.1 in}
\begin{lemma} \label{relax gamma(alpha)}
    \rm{For the constraint (\ref{P2_(3)})}, $\psi \triangleq 10\log_{10}(\frac{\left|h_{\mathrm{br}}\right|^{2}P_{\mathrm{b}}}{\alpha_{\text{br}}B N_0})$ is a convex function with respect to $\alpha_{\mathrm{br}}$. 
    For any local point $\tilde{\alpha}_{\mathrm{br}}$, $\psi$ can be lower-bounded as
    {\setlength\abovedisplayskip{1pt}
    \setlength\belowdisplayskip{2pt}
    \begin{equation}
            \psi \geq E_{9}-E_{10}(\alpha_{\mathrm{br}}-\tilde{\alpha}_{\mathrm{br}}) \triangleq \psi_{\mathrm{(lb)}}, \label{cons_gamma_alpha}    
        \end{equation}}%
    where the coefficients $E_{9}$ and $E_{10}$ are respectively given by
    $        E_{9} = 10\log_{10}{(\frac{\left|h_{\mathrm{br}}\right|^{2}P_{\mathrm{b}}}{\tilde{\alpha}_{\mathrm{br}}B N_0})}$,        and
   $ E_{10} = \frac{10}{\tilde{\alpha}_{\mathrm{br}}\ln{10}}.   $  
\end{lemma}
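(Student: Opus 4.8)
The plan is to follow exactly the template already used for Lemmas~\ref{lower-bound R_br(power)} and~\ref{relax R_br(alpha)}: first verify that $\psi$ has the claimed curvature in $\alpha_{\mathrm{br}}$, then invoke the fact that a differentiable convex function lies above its first-order Taylor expansion \cite{boyd2004convex} to obtain an affine (hence convex) lower bound, and finally identify the coefficients $E_9$ and $E_{10}$ with the value and the derivative of $\psi$ at the local point $\tilde{\alpha}_{\mathrm{br}}$.

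First I would split off the constant term, writing $\psi = 10\log_{10}\!\left(\frac{|h_{\mathrm{br}}|^{2}P_{\mathrm{b}}}{B N_0}\right) - \frac{10}{\ln 10}\ln \alpha_{\mathrm{br}}$, so that only the last summand depends on $\alpha_{\mathrm{br}}$. Since $\frac{d^{2}}{d\alpha_{\mathrm{br}}^{2}}\big(-\ln\alpha_{\mathrm{br}}\big) = \frac{1}{\alpha_{\mathrm{br}}^{2}}>0$ on the feasible region $\alpha_{\mathrm{br}}>0$ (guaranteed by~(\ref{P1_a2})) and the prefactor $\frac{10}{\ln 10}$ is positive, $\psi$ is convex in $\alpha_{\mathrm{br}}$.

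Next, by the first-order condition for convex functions, for any local point $\tilde{\alpha}_{\mathrm{br}}>0$ we have $\psi(\alpha_{\mathrm{br}}) \geq \psi(\tilde{\alpha}_{\mathrm{br}}) + \psi'(\tilde{\alpha}_{\mathrm{br}})(\alpha_{\mathrm{br}}-\tilde{\alpha}_{\mathrm{br}})$. Evaluating the ingredients, $\psi(\tilde{\alpha}_{\mathrm{br}}) = 10\log_{10}\!\left(\frac{|h_{\mathrm{br}}|^{2}P_{\mathrm{b}}}{\tilde{\alpha}_{\mathrm{br}}B N_0}\right) = E_9$ and $\psi'(\alpha_{\mathrm{br}}) = -\frac{10}{\alpha_{\mathrm{br}}\ln 10}$, so $\psi'(\tilde{\alpha}_{\mathrm{br}}) = -\frac{10}{\tilde{\alpha}_{\mathrm{br}}\ln 10} = -E_{10}$; substituting yields $\psi \geq E_9 - E_{10}(\alpha_{\mathrm{br}}-\tilde{\alpha}_{\mathrm{br}}) \triangleq \psi_{\mathrm{(lb)}}$, which is~(\ref{cons_gamma_alpha}). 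Since $\psi_{\mathrm{(lb)}}$ is affine in $\alpha_{\mathrm{br}}$, replacing $\psi$ by $\psi_{\mathrm{(lb)}}$ in~(\ref{P2_(3)}) tightens the constraint into a convex one, as required by the SCA step.

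No step here is genuinely difficult; the argument is the standard one-dimensional SCA linearization, essentially the same as that of Lemma~\ref{upper-bound R_ru(power)} up to a change of logarithm base and an overall sign. The only points worth checking carefully are that the derivative enters with a minus sign (so that $E_{10}$ appears subtracted, not added), and that convexity is asserted only on $\alpha_{\mathrm{br}}>0$, which is precisely the domain fixed by the bandwidth nonnegativity constraint~(\ref{P1_a2}).
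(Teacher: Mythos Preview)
Your proposal is correct and follows exactly the approach the paper intends: the paper does not give a separate proof for this lemma, relying instead on the same SCA/first-order Taylor-expansion argument spelled out for Lemma~\ref{lower-bound R_br(power)}, which is precisely what you have reproduced. Your explicit verification of convexity via the second derivative and identification of $E_9,E_{10}$ with $\psi(\tilde{\alpha}_{\mathrm{br}})$ and $-\psi'(\tilde{\alpha}_{\mathrm{br}})$ are correct and slightly more detailed than the paper itself.
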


Based on Lemmas \ref{relax R_br(alpha)}--\ref{relax gamma(alpha)} and (\ref{P6_(1)}), (\ref{cons_S}), problem (P6) can be transformed into the following approximate problem
\begingroup\makeatletter\def\f@size{8}\check@mathfonts
{\setlength\abovedisplayskip{1pt}
\setlength\belowdisplayskip{2pt}
\begin{subequations} 
    \begin{align}
    {\text{{\fontsize{10pt}{12pt}\selectfont (P7)}}} \max \limits_{\boldsymbol{\alpha},\boldsymbol{u_{\mathrm{ru}}}, \gamma_{\mathrm{br}}, S} &
    \overset{N}{\underset{n=1}{\sum}} \omega^{(n)} {\alpha_{\mathrm{ru}}^{(n)}B}\log_2\left(1+\frac{\left|h_{\mathrm{ru}}^{(n)}\right|^{2}P_{\mathrm{r}}^{(n)}}{\alpha_{\mathrm{ru}}^{(n)}B N_0}\right) \nonumber\\
    \textrm{s.t.}\quad\;\;
    & u_{\mathrm{ru}}^{(n)} \geq R_{\mathrm{ru}\mathrm{(up)}}^{(n)}, \forall n \in \mathcal{N}, \\
    &\gamma_{\mathrm{br}} \leq \psi_{\mathrm{(lb)}}, \label{relax_gamma_a_B}\\
    & \text{{\fontsize{10pt}{12pt}\selectfont (\ref{P1_a1}), (\ref{P1_a2}), (\ref{P2_(4)}), (\ref{P6_(1)}), (\ref{cons_S})}}. \nonumber
    \end{align}
\end{subequations}}
\endgroup

Problem (P7) is now a convex optimization problem, which can be efficiently solved by using CVX solvers.

\subsection{Algorithm Analysis}

We summarize the proposed algorithm for solving problem (P1) in Algorithm 1, where $f(\mathbf{P_r},\boldsymbol{\alpha}) \triangleq f(\boldsymbol{X})$ denotes the objective value of problem (P1) with the solution $\mathbf{P_r}$ and $\boldsymbol{\alpha}$, $\tau$ is the convergence accuracy. 
The optimal solution of (P1) can be obtained by executing steps 3--6 over iterations. 
The convergence and complexity analysis for Algorithm 1 is as follows. 
First, for the solution to problem (P1) $\boldsymbol{X}^{(r)}$ obtained in the $r$-th iteration, the objective value $f(\boldsymbol{X}^{(r)})$ is no smaller than  $f(\boldsymbol{X}^{(r-1)})$ for problem (P1). 
Thus, the objective value of problem (P1) is guaranteed to be non-decreasing over the iterations. 
Moreover, since the objective value has a finite upper bound, the proposed algorithm is guaranteed to converge to a local-optimal solution.
Note that the overall complexity of Algorithm 1 is polynomial in the worst case, since each iteration of Algorithm 1 only needs to solve convex optimization problems. 
Let $N_1$ denote the total number of optimization variables and $I$ denote the number of iterations to achieve convergence. 
Then the overall algorithm complexity can be characterized as $O(IN_{1}^{3.5})$ \cite{boyd2004convex}.
\addtolength{\topmargin}{+1pt}
\begin{algorithm}[!t] \label{algorithm_1}
    \caption{Proposed algorithm for solving problem (P1).}
    \begin{algorithmic}[1]
    \STATE Initialization: Obtain an initial solution $\mathbf{P_r}^{(0)}$, $\boldsymbol{\alpha}^{(0)}$; calculate $R_{\mathrm{sum}}^{(0)} = f(\mathbf{P_r}^{(0)},\boldsymbol{\alpha}^{(0)})$; set $r=0$.
    \REPEAT
    \STATE Update $r = r + 1$.
    \STATE With fixed system bandwidth allocation $\boldsymbol{\alpha}^{(r-1)}$, update the SemRelay transmit power variable $\mathbf{P_r}^{(r)}$ by solving problem (P5).
    \STATE With fixed SemRelay transmit power allocation $\mathbf{P_r}^{(r)}$, update the bandwidth variable $\boldsymbol{\alpha}^{(r)}$ by solving problem (P7). 
    \STATE Calculate $R_{\mathrm{sum}}^{(r)}  = f(\mathbf{P_r}^{(r)},\boldsymbol{\alpha}^{(r)})$ for the $r$th iteration.
    \UNTIL 
    $\frac{R_{\mathrm{sum}}^{(r)}-R_{\mathrm{sum}}^{(r-1)}}{R_{\mathrm{sum}}^{(r-1)}}<\tau$.
    \STATE Obtain $R_{\mathrm{sum}}^{\star} = R_{\mathrm{sum}}^{(r)}$, and the optimal solution is $\mathbf{P_r}^{(r)}$, $\boldsymbol{\alpha}^{(r)}$. 
    \end{algorithmic}
 \end{algorithm}
\section{Numerical Results}

In this section, we evaluate the effectiveness of the proposed algorithm as well as the performance of the proposed SemRelay-aided multiuser text transmission over Rician fading channels. 
The simulation setup is as follows. 
We consider a system with $N = 10$ ground users that are randomly and uniformly distributed within a maximum distance of $40 \ \mathrm{m}$ from the SemRelay. 
The BS-SemRelay distance is $d_{\mathrm{br}} = 60 \ \mathrm{m}$ and the Rician K-factor is 20 dB. 
The reference path loss is $\rho_{0}=-60 \ \mathrm{dB}$ and the path loss exponent is set as $\beta = 3.5$. 
Besides, we assume the same weight for all users, i.e., $\omega^{(n)} = 1, \forall n \in \mathcal{N}$. 
Moreover, we set $K = 5$, and the corresponding parameters of the semantic similarity function in (\ref{sesi}) can be numerically obtained as $a_{1} = 0.3760, a_{2} = 0.5970, c_{1} = 0.2634$, and $c_{2} = -0.8151$ \cite{mu2022semi_NOMA}. 
In order to ensure reliable transmission, the required semantic similarity is set as $\bar{\varepsilon} = 0.9$ \cite{mu2022semi_NOMA}. 
Consider that the ASCII encoding method is utilized to encode each letter, which uses 8-bit binary numbers to signify a character and allows for 256 distinct characters to be represented. 
Assuming that the average number of letters per word is 5, the average number of bits contained in each word is thus $\mu = 40 \ \mathrm{bits/word}$ \cite{yan2022resource}. 
Other parameters are set as $N_{0} = -169 \ \mathrm{dBm/Hz}, \tau = 10^{-6}$, and $P_{\mathrm{b}} = 2\ \mathrm{W}$. 
Furthermore, we consider the following three baseline schemes for performance comparison: 
1) SemRelay with optimized SemRelay transmit power given equal system bandwidth allocation, 
2) SemRelay with optimized system bandwidth allocation given equal transmit power allocation of the SemRelay, 
and 3) conventional DF relay with joint DF relay transmit power and system bandwidth allocation. 

\begin{figure}[!t]
    \centerline
    {\includegraphics[width=0.76\linewidth]{./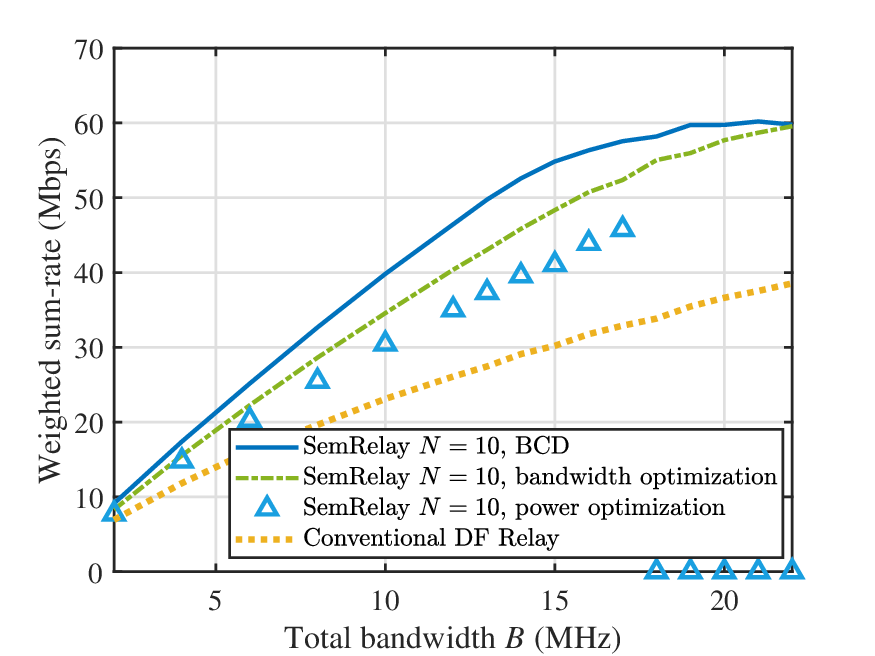}}
    \caption{The achievable weighted sum-rate versus total bandwidth $B$.}
    \label{fig_R_B}
\end{figure}

\begin{figure}[!t]
    \centerline
    {\includegraphics[width=0.76\linewidth]{./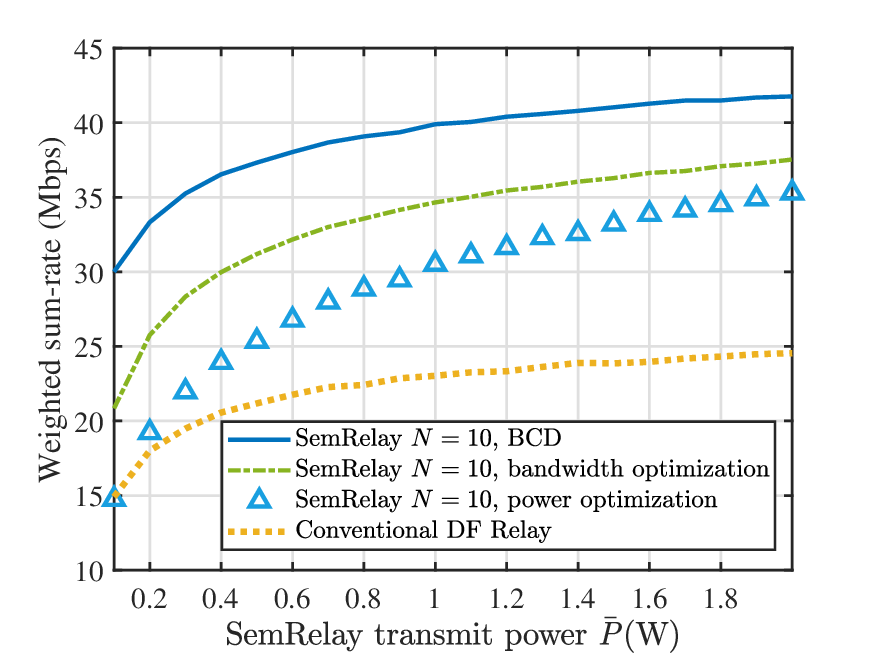}}
    \caption{The achievable weighted sum-rate versus SemRelay transmit power $\bar{P}$.}
    \label{fig_R_P}
    \vspace{-0.15 in}
\end{figure}

Fig. \ref{fig_R_B} presents the achieved (effective) weighted-sum-rate of various schemes versus the total bandwidth $B$, with the total transmit power of SemRelay $\bar{P} = 1 \ \mathrm{W}$. 
Several important observations are made as follows. 
First, the proposed BCD-based algorithm achieves much higher weighted sum-rate than the scheme with power allocation solely, while it has a mile performance gain over the baseline scheme with bandwidth optimization. This indicates that bandwidth allocation has a more dominant impact to improve the weighted sum-rate. 
Next, the SemRelay-aided communication system significantly outperforms the conventional DF relay, especially in the small-bandwidth region. 
This is because the BS-SemRelay transmission link transmits only the semantic information extracted from the original text, leading to higher spectral efficiency. 
It should be noted that the semantic similarity of data transmission is related to the total bandwidth. 
When the total bandwidth increases, the presence of noise in data transmission will have a negative impact on the transmitted semantic symbols, which will lead to a decrease in the semantic similarity. 
Moreover, for SemCom with a stringent requirement, i.e., a high semantic similarity, there is an upper bound of bandwidth to ensure the minimum semantic similarity for high-quality semantic transmission. 
The power optimization scheme with fixed bandwidth cannot achieve the required semantic similarity when the bandwidth is sufficiently large (e.g., $B > 17 \ \mathrm{MHz}$), hence the required data transmission is not possible and the achievable bit rate is 0. 

Fig. \ref{fig_R_P} shows the achieved weighted sum-rate of different schemes versus the SemRelay transmit power $\bar{P}$, with the total bandwidth $B = 10 \ \mathrm{MHz}$. 
It is observed that the weighted sum-rates for all schemes slowly increase with $\bar{P}$ when $\bar{P}$ is larger than $1 \ \mathrm{W}$. 
Moreover, comparing Fig. \ref{fig_R_B} and Fig. \ref{fig_R_P}, it can be concluded that the impact of bandwidth allocation has a greater impact than the SemRelay transmit power allocation in the system weighted sum-rate. 

\section{Conclusions}
In this paper, we proposed a novel SemRelay for improving multiuser text transmission efficiency. 
In particular, the proposed SemRelay equipped with a DeepSC receiver allows semantic transmission over the BS$\rightarrow$SemRelay link and bit transmissions over the SemRelay$\rightarrow$users links. 
For the considered multiuser system, an optimization problem was formulated to maximize the weighted sum-(bit)-rate by jointly designing the SemRelay transmit power allocation and system bandwidth allocation. 
Although this problem is non-convex and hence challenging to solve, we proposed an efficient algorithm to obtain its suboptimal solution by using the BCD and SCA methods. 
Numerical results demonstrated the significant rate performance gain of the proposed SemRelay over conventional DF relay as well as other benchmarks. 
This work can be extended in future works to study the resource management in other SemCom systems such as semantic-aware image/video transmission and knowledge-graph based semantic transmission. 

\vspace{-0.05 in}
\section*{Acknowledgment}
\vspace{-0.05 in}
This work was supported by the National Natural Science Foundation of China under Grant 62201242, 62331023, and Young Elite Scientists Sponsorship Program by CAST 2022QNRC001.
\bibliographystyle{IEEEtran}
\bibliography{reference}
\end{document}